\DeclareMathOperator{\Haf}{\text{Haf}}
\DeclareMathOperator{\expec}{\mathbb{E}}
\DeclareMathOperator{\sech}{sech}
\newcommand{\beq}{\begin{equation}}
\newcommand{\eeq}{\end{equation}}
\newtheorem{thm}{Theorem}
\newtheorem{defn}{Definition}
\begin{document}
\title{Quantum approximate optimization with Gaussian boson sampling}
\author{Juan Miguel Arrazola}
\email{juanmiguel@xanadu.ai}
\author{Thomas R. Bromley}
\email{tom@xanadu.ai}
\author{Patrick Rebentrost}
\email{pr@patrickre.com}
\affiliation{Xanadu, 372 Richmond Street W, Toronto, Ontario M5V 1X6, Canada}

\begin{abstract}
Hard optimization problems are often approached by finding approximate solutions. Here, we highlight the concept of proportional sampling and discuss how it can be used to improve the performance of stochastic algorithms for optimization. We introduce an NP-Hard problem called Max-Haf and show that Gaussian boson sampling (GBS) can be used to enhance any stochastic algorithm for this problem. These results are applied by enhancing the random search, simulated annealing, and greedy algorithms.
With numerical simulations, we confirm that all algorithms are improved when employing GBS, and that GBS-enhanced random search performs the best despite being the one with the simplest underlying classical routine.
\end{abstract}
\maketitle
In this unique time in history, we are witnessing the emergence of the first generation of quantum computers. The so-called noisy intermediate-scale quantum (NISQ) era \cite{preskill2018NISQ} signals the first time that quantum devices capable of outperforming classical computers at specific tasks will become available for public use. With universal fault-tolerant quantum computing still a long road ahead, short-term focus has shifted to identifying the problems that current quantum devices can solve more efficiently than classical computers, without necessarily placing emphasis on their practical applications \cite{boixo2016characterizing, aaronson2016complexity, farhi2016quantum, bremner2016average, bremner2017achieving, douce2017continuous, arrazola2017quantum, gao2017quantum, bermejo2017architectures, harrow2017quantum}.
 
One such example is boson sampling: the task of generating random outcomes from the probability distribution induced by a collection of indistinguishable photons passing through a linear optics network \cite{aaronson2011computational}. Boson sampling was initially conceived as an experimentally appealing method of challenging the extended Church-Turing thesis, which states that all physically realizable processes can be efficiently simulated on a classical Turing machine. Since then, there has been significant effort to implement boson sampling \cite{spring1231692, broome794, tillmann2013experimental, crespi2013integrated} and to propose related models such as scattershot boson sampling \cite{lund2014boson,bentivegna2015experimental,latmiral2016towards} and Gaussian boson sampling \cite{hamilton2017gaussian, kruse2018detailed} that are more amenable to experiments. In all these cases, the focus has been to outperform classical computers at the corresponding sampling task instead of using these devices for solving practical problems. Arguably the first clue that boson sampling could actually be linked to problems of interest came from Ref. \cite{huh2015boson}, where it was shown that Gaussian boson sampling could be used to efficiently infer the vibronic spectra of complex molecules. Proof-of-principle experiments have also been recently reported \cite{clements2017experimental, sparrow2018simulating}.

In this work, which is a companion paper to~\cite{arrazola2018quantum2_published}, we show that Gaussian boson sampling (GBS) can be used to enhance stochastic algorithms for an NP-Hard optimization problem related to finding optimal submatrices. The main insight behind our results is that there exists problems where the GBS distribution naturally assigns a probability to each outcome that is proportional to their value with respect to the underlying figure of merit. This causes good outcomes to be sampled with high probability and bad results to almost never be observed. In cooperation with classical stochastic optimization algorithms, this intrinsic bias can be harnessed to give rise to hybrid algorithms that perform improved searches over the optimization landscape. 

In the following, we discuss the concept of proportional sampling and outline its advantage with respect to uniform sampling in stochastic optimization. We introduce a generic optimization problem which we call the Max-Haf problem and show that the output distribution from Gaussian boson sampling (GBS) is a proportional distribution for this task. Max-Haf is the canonical optimization problem for GBS and it serves as the template for extending GBS to other optimization problems, as we do for the densest $k$-subgraph problem in~\cite{arrazola2018quantum2_published}. We give analytical results quantifying the expected improvement to random search obtained by sampling from the GBS distribution and suggest how GBS can be used within more advanced heuristic algorithms. Our results are complemented with a numerical study of the performance of these enhanced algorithms on an example Max-Haf problem.

\section{Proportional sampling}
A combinatorial optimization problem can be cast as follows. Given a set $X=\{x_1,x_2,\ldots,x_N\}$ and an objective function $f:X\rightarrow \mathbb{R}$, the goal is to find an element $x^*\in X$ such that $f(x^*)\geq f(x)$ for all $x\in X$. Let $y_i:=f(x_i)$ and define the set $Y=\{y_1,y_2,\ldots,y_N\}$ as well as the $N$-dimensional vector $
y=(y_1,y_2,\ldots,y_N)$. We focus on the case where $y_i\geq 0$ for all $i=1,\ldots,N$. This can be guaranteed for example by finding a lower bound $y_0$ such that $f(x)\geq y_0$ for all $x\in X$ and introducing a new objective function $f'(x)=f(x)-y_0,$ without altering the solution to the optimization problem.

Suppose that we can draw samples from $X$ according to some probability distribution $P(x)$, so that we now think of $Y$ as a random variable with corresponding probability distribution $P(Y = y)$. The simplest stochastic algorithm for giving an approximate solution to the above combinatorial optimization problem, known as random search, is to sample multiple values of $X$ and choose the one with the largest value of the objective function. For a uniform distribution of samples $P_{U}(x) = 1/N$, the expected value of $Y$ is
\begin{align}
\mathbb{E}_{U}(Y)=\frac{1}{N}\sum_{i=1}^Ny_i=\frac{1}{N}\Vert y\Vert _1,
\end{align}
where $\Vert \cdot \Vert_p$ is the $p$-norm. Now suppose that we instead draw samples with respect to a distribution where the probability of drawing a given element $x \in X$ is proportional to the objective function $f(x)$, i.e.,
\beq
P_{\propto}(x_i)=\frac{1}{\Vert y\Vert _1}y_i.
\eeq
We refer to this type of sampling as \emph{proportional sampling}. Here, the expected value of $Y$ is
\beq
\mathbb{E}_{\propto}(Y)=\frac{1}{\Vert y\Vert _1}\sum_{i=1}^N y_i^2=\frac{(\Vert y\Vert _2)^2}{\Vert y\Vert _1}.
\eeq
For $p$-norms on finite-dimensional vector spaces it holds that
\beq\label{Eq: p,q}
\Vert y\Vert_p\leq N^{\frac1p-\frac1q}\Vert y\Vert_q
\eeq
for any $q>p\geq 1$. The ratio between the expectation values of $Y$ with respect to the proportional and uniform distributions hence satisfies
\begin{align}
\frac{\mathbb{E}_{\propto}(Y)}{\mathbb{E}_{U}(Y)}=\left(\frac{\sqrt{N} \Vert y\Vert _2}{\Vert y\Vert _1}\right)^2\geq 1,
\end{align}
where we have used Eq. \eqref{Eq: p,q} for the case $p=1$, $q=2$. This inequality means that sampling from the proportional distribution is never worse on average than sampling from the uniform distribution. The advantage obtained from proportional sampling increases when the elements of $Y$ have very different values, in particular when there are a few elements that are much larger than the rest. Indeed, on one extreme, there is only a single non-zero element of $Y$ and the proportional distribution outputs the optimal element $x^{*}$ with certainty, thus solving the combinatorial optimization problem with a single sample. On the other extreme, all of the elements of $Y$ are equal and the proportional distribution reduces to the uniform distribution.

\subsection{Enhancing random search algorithms}

Let us build on the intuition that proportional sampling is beneficial by providing a framework for assessing the performance of random search. Suppose that we sort $X$ so that $Y$ is in non-decreasing order, i.e., $y_1\leq y_2\leq \ldots \leq y_N$. The combinatorial optimization problem depends only on the relative values of $Y$ and we can hence change this random variable so that a given $x_i$ has a corresponding $y_i$ now given by $y_i= i / N$. The value $y_i$ thus denotes the fraction of samples $x \in X$ whose objective function value $f(x)$ does not exceed $f(x_{i})$. In typical optimization problems of interest, the size of the sample space $N$ is exponentially large, so we can approximate $Y$ as a continuous random variable with values in the interval $(0,1]$ and probability density function $p(y)$. The goal of the combinatorial optimization is to find the $x^{*} \in X$ such that $Y = 1$, and we can find approximate solutions through random search by taking $\kappa$ samples of $Y$ and finding the maximum. The result of random search is another random variable
\beq
Z=\max\{Y_1,Y_2,\ldots , Y_\kappa \},
\eeq
with $Z \in (0,1]$ and its corresponding probability density function is determined by $p(y)$.

When $Y$ has a uniform distribution, i.e., $p(y)=1$, the cumulative distribution of the random variable $Z$ is
\begin{align}
P_U(Z\leq z)&=\prod_{j=1}^\kappa p(y \leq z)=z^\kappa,
\end{align}
and its probability distribution is
\beq
P_U(Z=z)=\kappa z^{\kappa-1},
\eeq
which is a Beta distribution with mean $\kappa/(\kappa+1)$. We thus conclude that the expected output of the random search algorithm using uniform sampling is
\beq
\expec_U (Z)=\frac{\kappa}{\kappa+1}.
\eeq
This should be interpreted as stating that, after taking $\kappa$ samples of $X$ from the uniform distribution, on average we will find a value of $f(x)$ that is larger than a fraction $\kappa/(\kappa+1)$ of all possible values of $f(x)$ among $x \in X$.

Instead, if $p(y)$ corresponds to a proportional distribution on $X$, then
\begin{equation}
P_{\propto}(Z\leq z) =\left[\int_0^z p(y)dy\right]^\kappa,
\end{equation}
and so
\begin{align*}
P_{\propto}(Z = z)& = \frac{d}{dz}P_{\propto}(Z\leq z)\\
&=\kappa\left[\int_0^z p(y)dy\right]^{\kappa-1}p(z).
\end{align*}
The expectation value of $Y$ when sampled from the proportional distribution is thus
\beq \label{Eq:Exp}
\expec_{\propto}(Z)=\kappa\int_0^1 z\left[\int_0^z p(y)dy\right]^{\kappa-1}p(z)dz.
\eeq
In general, the resultant expectation value will depend on the particular form of $p(y)$. To illustrate the effect of proportional sampling, we focus on the case where $p(y)$ is exponentially increasing in $y$, given by the form
\beq\label{Eq: p(x)}
p(y) =  \frac{\lambda e^{\lambda(y-1)}}{1-e^{- \lambda}},
\eeq
for some positive constant $\lambda$.
For small values of $\lambda$, the distribution is close to uniform, whereas for larger values, the probability is increasingly concentrated on the largest values, as illustrated in Fig.~\ref{Fig: Exp_Dbns}. With this distribution, the expectation value in Eq.~(\ref{Eq:Exp}) can be evaluated using standard integral tables as
\begin{align}
\expec_{\propto}(Z)=& \frac{1}{\lambda}(1-e^{\lambda})^{-\kappa} \left[ \left((1-e^{\lambda})^{\kappa} - 1\right) \lambda - H_\kappa \right. \\
& \left. - \kappa e^{\lambda}\cdot {}_3F_2(1,1,1-\kappa;2,2;e^\lambda) \right]
\end{align}
where $H_\kappa$ is the $\kappa$-th harmonic number defined as
\beq
H_\kappa=\sum_{n=1}^\kappa\frac{1}{n}
\eeq
and $_3F_2(a_1,a_2,a_3;b_1,b_2;z)$ is a generalized hypergeometric function. The improvement obtained by performing proportional sampling is illustrated in Fig. \ref{Fig: Max_Prop_sampling}. Here we plot $\alpha := -\log_{10}[1-\expec_{\propto}(Z)]$ as a function of $\kappa$ for different values of $\lambda$. This function quantifies how close the expected result of random search is to finding the optimal $X$ with value $Y=1$, i.e., for a given value of $\alpha$ then $\expec_{\propto}(Z)$ is larger than a fraction $1-10^{-\alpha}$ of all possible values of $Y$. As expected, proportional sampling always leads to a significant improvement compared to uniform sampling, with the output of the optimization algorithm $\expec_{\propto}(Z)$ being located in ranges that are orders of magnitude higher. The advantage increases with the parameter $\lambda$, showcasing the added benefits of proportional sampling when there is a small subset of elements with much larger values of the objective function. 

The benefits of sampling from probability distributions that are tailored for specific problems has long been understood. Indeed, the goal of Markov chain Monte Carlo (MCMC) algorithms is to use a source of uniform randomness to generate samples from  more general distributions. In this sense, proportional sampling can be interpreted as a type of importance sampling, a technique widely employed in optimization and Monte Carlo methods \cite{rubinstein2016simulation}, where here the importance is achieved through proportionality. Nevertheless, the overhead that arises from MCMC algorithms can be extremely costly, and there exist distributions that cannot be sampled efficiently using classical algorithms. For this reason, it is highly desirable to have access to physical devices that can generate high-rate samples from useful probability distributions. As we show in the following section, Gaussian boson sampling systems can serve that role by sampling from proportional distributions that are useful for optimization problems related to finding optimal submatrices.

Before proceeding, it is important to remark on the methods used to solve combinatorial optimization problems and the resultant usefulness of stochastic algorithms. Approximate optimization becomes relevant for any problem that takes superpolynomial time to solve. In these cases, a desirable approach is to identify a polynomial time approximation scheme. However, such a scheme does not always exist for every problem, or can be hard to identify. One alternative approach is to make use of deterministic or stochastic based heuristic algorithms to provide approximate solutions. The choice of deterministic or stochastic algorithm is problem dependent. Proportional sampling is a tool to enhance stochastic heuristic algorithms.

\begin{figure}[t!]
\includegraphics[width=0.95\columnwidth]{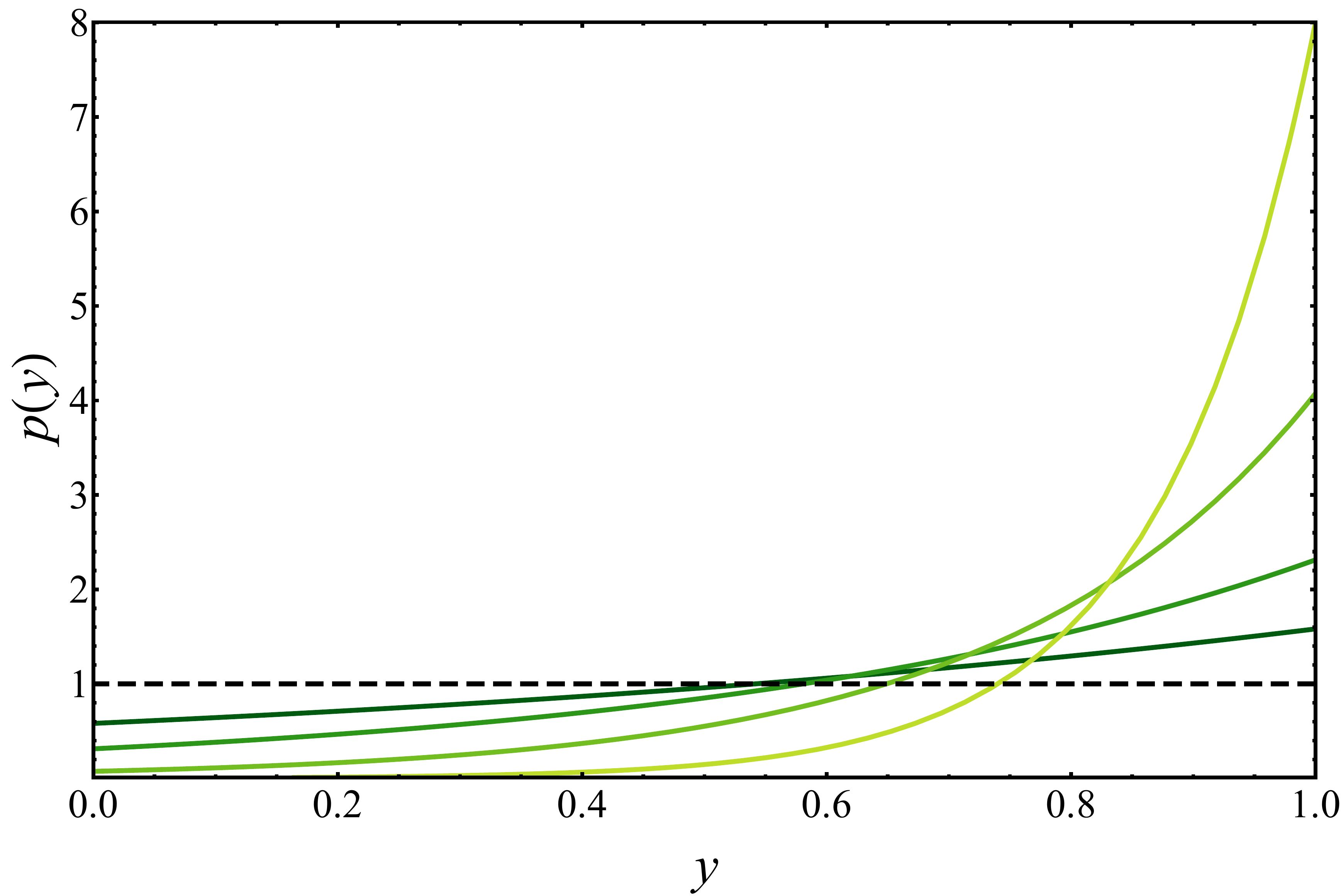}
\caption{Plot of a uniform distribution (dashed black line) and the probability distribution $p(y)=\lambda e^{\lambda(y-1)}/(1-e^{-\lambda})$ for $\lambda=1,2,4$, and $8$, where the distributions are further from uniform for increasing values of $\lambda$. }\label{Fig: Exp_Dbns}
\end{figure}

\begin{figure}[t!]
\includegraphics[width= \columnwidth]{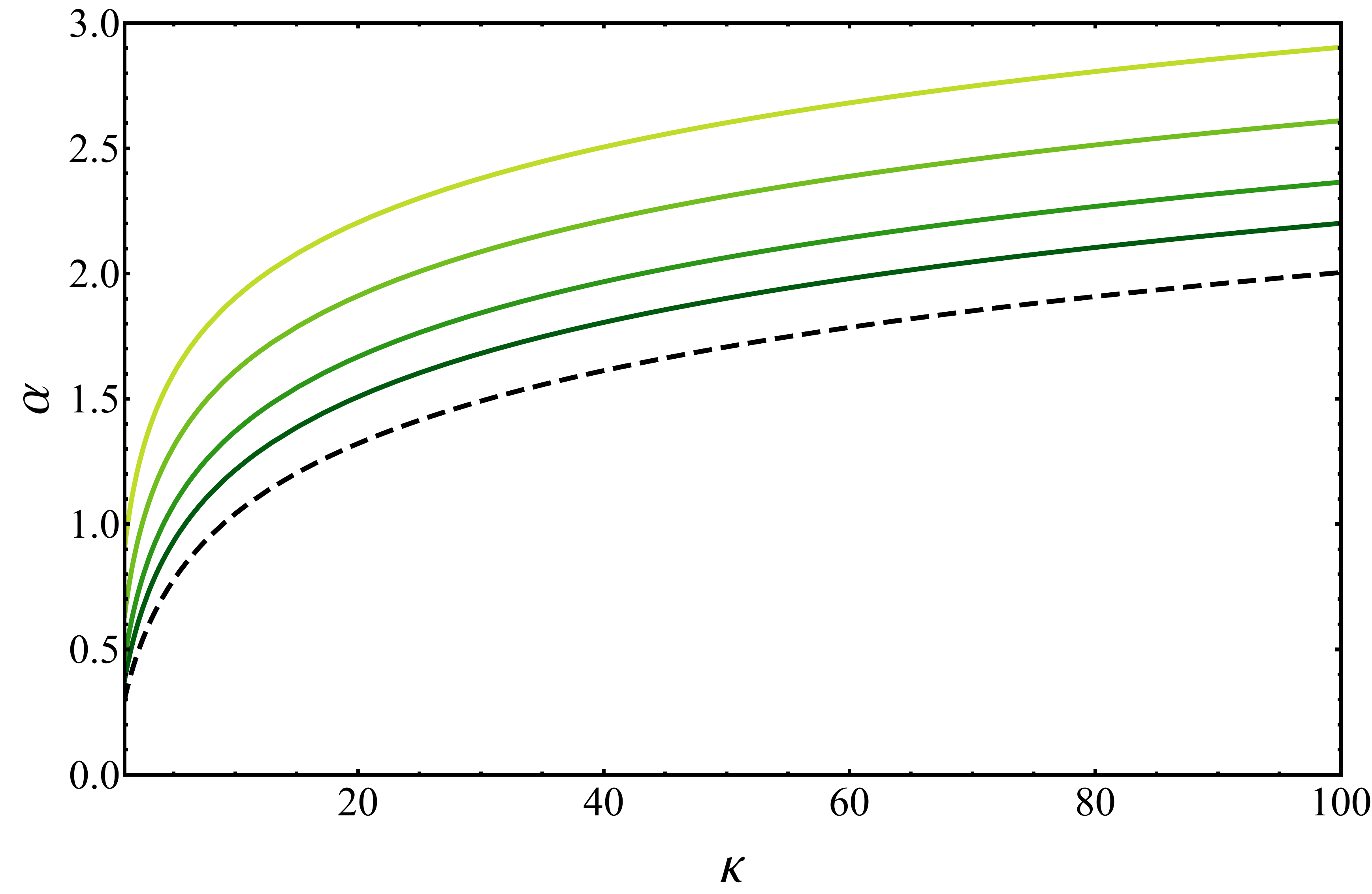}
\caption{Plot of the coefficient $\alpha$ as a function of the number of samples $\kappa$ for proportional sampling compared to uniform sampling. The dashed black line corresponds to uniform sampling and higher curves correspond to proportional sampling with distribution $p(y)=\lambda e^{\lambda(y-1)}/(1-e^{-\lambda})$ and values $\lambda=1,2,4$, and $8$. Proportional sampling always outperforms uniform sampling and the advantage increases with $\lambda$. }\label{Fig: Max_Prop_sampling}
\end{figure}

\section{Gaussian boson sampling}

Gaussian boson sampling (GBS) \cite{hamilton2017gaussian,kruse2018detailed} is a variant of boson sampling where the input to a linear optical network is a multi-mode Gaussian state instead of a collection of single photons. From an experimental perspective, this platform offers several advantages compared to traditional boson sampling and is therefore a leading candidate for the practical development of larger scale boson sampling devices. In its simplest form, the input Gaussian state in GBS is a tensor product of squeezed vacuum states in each mode. We denote the possible outputs of GBS by vectors $S=(s_1,s_2,\ldots,s_{n})$, where there are a total of $n$ input and output modes and $s_i$ is the number of photons detected in output mode $i$. It was shown in Ref. \cite{hamilton2017gaussian} that for a linear optics network characterized by the $n\times n$ unitary $U$ and for input states with equal squeezing parameter $r>0$ on all modes, the probability of observing an output pattern $S$ is given by
\beq\label{Eq: GBS_dbn}
\Pr(S)=\frac{1}{(\cosh r)^n}\frac{|\text{Haf}(B_S)|^2}{s_1!s_2!\cdots s_n!},
\eeq
where $B=\tanh r\times UU^t$ and $B_S$ is the matrix obtained by selecting the rows and columns corresponding to the non-zero entries of $S$. The function $\Haf(\cdot)$ is the Hafnian of a matrix, defined as
\beq \label{Eq:HafDef}
\Haf(X)=\sum_{\sigma\in \text{PMP}_{2m}}\prod_{j=1}^mX_{\sigma(2j-1),\sigma(2j)}
\eeq
for even dimensional matrices $X$, where $\mbox{dim}(X)=2m$, and $\Haf(X)=0$ for odd dimensional $X$. Here, $\text{PMP}_{2m}$ is the set of perfect-matching permutations on $2m$ elements. The Hafnian can be shown to be \#P-Hard to approximate in a worst-case setting, a fact that has been leveraged to argue that even approximate simulation of GBS cannot be performed in polynomial time on classical computers unless the polynomial hierarchy collapses to third level \cite{hamilton2017gaussian}.

\subsection{The Max-Haf Problem}

{We now introduce the Max-Haf combinatorial optimization problem, defined on a complex valued matrix $B$ of any dimension. For Max-Haf, the problem is to find a submatrix $B_{S}$ of $B$, among those of a fixed even dimension $k=2m$, with the largest Hafnian in absolute value, i.e.,
\begin{equation}\label{Optimization}
\begin{aligned}
&\text{given } B\text{, solve:}\\
& \underset{B_S}{\mbox{argmax}}\,|\text{Haf}(B_S)|^2\\
& \text{Subject to: }\text{dim}(B_S)= k.
\end{aligned}
\end{equation}
This problem is NP-hard, as we now prove.
\begin{thm}
The Max-Haf problem is NP-Hard to solve exactly in a worst-case setting.
\end{thm}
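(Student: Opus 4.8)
The plan is to give a polynomial-time reduction from the $k$-Clique problem, one of Karp's NP-complete problems: given a graph $G$ on $n$ vertices and an integer $k$, decide whether $G$ contains a clique on $k$ vertices. Since Max-Haf fixes the submatrix dimension to an even number $k = 2m$, I would first reduce to the case of even $k$: if $k$ is odd, adjoin a single universal vertex adjacent to all others, producing a graph that has a $(k+1)$-clique if and only if the original graph has a $k$-clique. From now on assume $k$ is even.

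The key structural observation is that for a $0/1$ symmetric matrix $A$ — the adjacency matrix of a graph $H$ — the definition in Eq.~\eqref{Eq:HafDef} specializes to $\Haf(A) = |\mathrm{PM}(H)|$, the number of perfect matchings of $H$: each term $\prod_j A_{\sigma(2j-1),\sigma(2j)}$ is the product of $A$-entries along the edges of a perfect matching of the complete graph on the vertex set, and this product equals $1$ exactly when all those edges belong to $H$, and $0$ otherwise. This count is monotone under edge addition, so among all graphs on $k=2m$ vertices the complete graph $K_{2m}$ uniquely maximizes the Hafnian, with $\Haf(A_{K_{2m}}) = (2m-1)!!$; moreover any non-complete graph on $2m$ vertices has strictly fewer, since a missing edge $\{a,b\}$ kills the (nonempty set of) perfect matchings of $K_{2m}$ that use $\{a,b\}$. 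Hence, for $|S| = k$, the induced subgraph $G[S]$ is a clique if and only if $\Haf(A_S) = (k-1)!!$, equivalently if and only if $|\Haf(A_S)|^2 \geq ((k-1)!!)^2$.

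The reduction is then: given $(G,k)$ with $k$ even, take $B$ to be the adjacency matrix of $G$ and hand it to an exact Max-Haf solver with dimension $k$, obtaining an optimal submatrix $B_{S^\ast}$. By the observation above, $G$ has a $k$-clique if and only if the optimal value equals $((k-1)!!)^2$, equivalently if and only if $G[S^\ast]$ is itself a clique — a condition checkable with $O(k^2)$ edge lookups, without ever evaluating a Hafnian. An exact Max-Haf solver together with this trivial post-check therefore decides $k$-Clique in polynomial time, so Max-Haf is NP-hard to solve exactly in the worst case.

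The argument is short, so the only places requiring care — which I would flag explicitly — are the parity reduction and, more subtly, the fact that a Max-Haf solver returns an argmax submatrix rather than its (\#P-hard to compute) Hafnian value; the resolution is that cliquehood of the returned vertex set can be verified directly. I would also include the strict-monotonicity check, so that a negative answer from the post-check genuinely certifies the absence of a $k$-clique (because the threshold $((k-1)!!)^2$ is attained only by complete induced subgraphs) rather than merely indicating that the solver returned a non-clique optimum.
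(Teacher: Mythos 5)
Your proof is correct and follows essentially the same route as the paper: both reduce from the clique problem by exploiting the fact that, among graphs on a fixed even number of vertices, the complete graph uniquely maximizes the Hafnian of the adjacency matrix, so a Max-Haf oracle applied to the adjacency matrix detects cliques. Your version is somewhat more careful than the paper's --- you justify uniqueness of the maximizer via perfect-matching counts, handle odd $k$ explicitly with a universal-vertex gadget, and observe that cliquehood of the returned set should be checked directly rather than by comparing (\#P-hard) Hafnian values --- but these are refinements of the same argument, not a different one.
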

\begin{proof}
The proof is by reduction to the maximum clique problem, which is known to be NP-Hard \cite{bomze1999maximum}. We consider the specific case where $B$ is the adjacency matrix of a graph $G$, so that the submatrices $B_S$ are equivalent to subgraphs of $G$. The largest possible Hafnian of a graph is achieved by the complete graph, i.e., a clique, and a complete graph is the only graph that achieves this maximum value. Thus, given an algorithm to solve Max-Haf, we can iterate over all values of $k$ and output the largest $k$ such that the Hafnian of the optimal subgraph is equal to the Hafnian of a clique. This solves the maximum clique problem with $\sum_{k=1}^nk=O(n^2)$ calls, i.e., a polynomial number of calls, to the algorithm for Max-Haf. This implies that Max-Haf is also NP-Hard.
\end{proof}
The NP-hardness of the Max-Haf problem means that in general we cannot expect to find exact solutions in polynomial time and must instead settle for approximate solutions. It is important to emphasize that the Max-Haf problem is introduced here for the first time due to its fundamental link to Gaussian boson sampling, as now discussed.

Suppose that $B$ can be written as $\tanh r \times U U^{t}$. By construction, from Eq.~\eqref{Eq: GBS_dbn} we know that the GBS distribution is a proportional distribution for Max-Haf, provided that the GBS output is postselected on samples corresponding to submatrices of dimension $k$, see Eq.~\eqref{Gamma_S} below. This enables all the advantages outlined in the previous section for stochastic algorithms aimed at providing approximate solutions. Moreover, approximate sampling from the GBS distribution is infeasible for classical devices, further motivating the importance of a physical GBS device. The NP-Hardness of Max-Haf also implies that we can in principle reduce any problem in NP to Max-Haf and then employ GBS to improve the performance of stochastic optimization algorithms for the resulting problem. Alternatively, we can identify quantitative connections between the Hafnian and relevant objective functions of other optimization problems, rendering samples from a GBS device directly useful for finding approximate solutions to these problems as well. In our companion paper~\cite{arrazola2018quantum2_published}, we show how GBS can be used for finding approximate solutions to the densest $k$-subgraph problem \cite{feige2001dense}, which has several appliations in data mining \cite{kumar1999trawling,angel2012dense,beutel2013copycatch,chen2012dense}, bioinformatics \cite{fratkin2006motifcut,saha2010dense}, and finance \cite{arora2011computational}.

\subsection{Enhancement of approximate solutions through GBS}

Here we focus on instances of GBS where the interferometer unitary $U$ is drawn from the Haar measure. As shown in Ref. \cite{hamilton2017gaussian}, in this case the $n\times n$ matrix $B=\tanh r\times UU^t$ is proportional to a unitary drawn from the circular orthogonal ensemble (COE). Moreover, for $m$ sufficiently smaller than $n$, namely when $m=O(\sqrt{n})$, the matrix $B_S$ is well approximated by a symmetric random Gaussian matrix with entries drawn from the complex normal distribution $\mathcal{N}(0,\frac{\tanh r}{\sqrt{n}})$. We focus on this setting in the following to evaluate the single-shot enhancement of using GBS proportional sampling, following methods introduced in~\cite{aaronson2011computational}.

As we have discussed in the previous section, a simple classical random search algorithm for solving the corresponding Max-Haf problem proceeds by sampling even $k$-dimensional submatrices $B_S$ of $B$ uniformly at random. The algorithm generates many such samples, calculates the Hafnians of the corresponding submatrices and outputs the sample with the largest Hafnian. We represent the set of dimension $k$ submatrices of $B$ by the labels
\beq\label{Gamma_S}
\Gamma_{S,k}:=\left\lbrace S : s_i\in\{0,1\}\, \forall i, \,\sum_is_i= k \right\rbrace
\eeq
such that each element $S\in\Gamma_{S,k}$ describes the rows and columns of $B$ selected to form the submatrix $B_{S}$. Since $|\Gamma_{S,k}|=\binom{n}{k}$, the expected value of $|\Haf(B_S)|^2$ under uniform sampling satisfies
\begin{equation}\label{mu_U}
\hat{\mu}_U:=\binom{n}{k}^{-1}\sum_{S\in \Gamma_{S,k}} |\Haf(B_S)|^2.
\end{equation} 
The submatrices of $B$ are approximated by symmetric Gaussian random matrices \cite{hamilton2017gaussian}, so the above quantity is an approximation of the expectation value over the distribution $\mathcal{G}$ of these matrices, i.e., $\hat{\mu}_U \approx \mathbb{E}_{\mathcal{G}}(|\Haf(B_S)|^2)$.

We now compute the expectation value of $|\Haf(B_S)|^2$ when employing GBS, which we evaluate in terms of $k$, $n$ and $r$. Let $p$ be the probability that a GBS sample satisfies $S\in\Gamma_{S,k}$. This probability approaches unity for $k \ll n$. Furthermore, let $q_{n,r}(k)$ be the user-controlled probability of sampling the required $k$ photons when there are $n$ input squeezed modes with squeezing parameter $r$, given by
\beq
q_{n,r}(k)=\binom{\frac{n+k}{2}-1}{\frac{k}{2}}(\sech r)^n(\tanh r)^{k}.
\eeq
Using the fact that $s_1!s_2!\cdots s_n!=1$ for all $S\in\Gamma_{S,k}$, we have from Eq.~\eqref{Eq: GBS_dbn} that the expectation value of $|\Haf(B_S)|^2$ using GBS satisfies
\begin{align}
&\hat{\mu}_{GBS}:=\frac{1}{pq_{n,r}(k)}\sum_{S\in\Gamma_{S,k}}|\Haf(B_S)|^2\frac{1}{(\cosh r)^n}|\Haf(B_S)|^2\nonumber\\
&=\frac{1}{pq_{n,r}(k)(\cosh r)^n}\binom{n}{k}\binom{n}{k}^{-1}\sum_{S\in\Gamma_{S,k}}|\text{Haf}(B_S)|^4\nonumber\\
&\approx \frac{\binom{n}{k}}{q_{n,r}(k)(\cosh r)^n}\mathbb{E}_{\mathcal{G}}(|\text{Haf}(B_S)|^4)\label{mu_GBS}.
\end{align}  

Thus, to compute the expected value of $|\Haf(B_S)|^2$ when sampling from the uniform and GBS distributions, we need to compute the first two moments $\mathbb{E}_{\mathcal{G}}(|\text{Haf}(B_S)|^2)$ and $\mathbb{E}_{\mathcal{G}}(|\text{Haf}(B_S)|^4)$ of the random variable $|\text{Haf}(B_S)|^2$ with $B_{S}$ sampled from $k$-dimensional symmetric Gaussian random matrices with entries drawn from $\mathcal{N}(0,\frac{\tanh r}{\sqrt{n}})$. For simplicity, we introduce the random variable 
\beq\label{Eq: X to B_S}
X=\frac{\sqrt{n}}{\tanh r}B_S ,
\eeq
so that $X$ is now a random Gaussian matrix with entries drawn from the standard complex normal distribution $\mathcal{N}(0,1)$. It is straightforward to compute the first moment $\mathbb{E}_{\mathcal{G}}(|\text{Haf}(X)|^2)$ using the definition in Eq.~\eqref{Eq:HafDef}:
\begin{align}\label{haf^2}
&\mathbb{E}_{\mathcal{G}}\left(\sum_{\sigma,\tau\in \text{PMP}_{k}}\prod_{j=1}^n X_{\sigma(2j-1),\sigma(2j)}X^*_{\tau(2j-1),\tau(2j)}\right)\nonumber\\
&=\mathbb{E}_{\mathcal{G}}\left(\sum_{\sigma\in \text{PMP}_{k}}\prod_{j=1}^n |X_{\sigma(2j-1),\sigma(2j)}|^2\right)\nonumber\\
&=\sum_{\sigma\in \text{PMP}_{k}}\prod_{j=1}^n \mathbb{E}_{\mathcal{G}}\left(|X_{\sigma(2j-1),\sigma(2j)}|^2\right)\nonumber\\
&=\sum_{\sigma\in \text{PMP}_{k}}1\nonumber\\
&=(k-1)!!
\end{align}
The calculation for the second moment $\mathbb{E}_{\mathcal{G}}(|\text{Haf}(X)|^4)$ is significantly more involved, the details of which we defer to the Appendix. The result is
\beq\label{haf^4}
\mathbb{E}_{\mathcal{G}}(|\text{Haf}(X)|^4)=k!
\eeq
From Eqs. \eqref{Eq: X to B_S}, \eqref{haf^2} and \eqref{haf^4} we get
\begin{align}
\mathbb{E}_{\mathcal{G}}(|\text{Haf}(B_S)|^2)&=\frac{(\tanh r)^{k}}{n^{k/2}}(k-1)!!\label{haf(B)^2}\\
\mathbb{E}_{\mathcal{G}}(|\text{Haf}(B_S)|^4)&=\frac{(\tanh r)^{2k}}{n^{k}}k!\label{haf(B)^4}
\end{align}
Finally, using these values we obtain the ratio between the expectation values $\mu_{GBS}$ and $\mu_U$, given by
\beq\label{Eq: Ratio}
R:=\frac{\hat{\mu}_{GBS}}{\hat{\mu}_U}=\frac{\binom{n}{k}k!!}{\binom{\frac{n+k}{2}-1}{\frac{k}{2}}n^{k/2}}.
\eeq
This ratio characterizes the advantage obtained by using GBS, as illustrated in Fig \ref{Fig: Hafnian advantage}. Note that this ratio does not depend on the squeezing parameter $r$, whose only role is to determine the probability $q_{n,r}(k)$ of observing the desired number of photons.

\begin{center}
\begin{figure}[t!]
\includegraphics[width= 0.9\columnwidth]{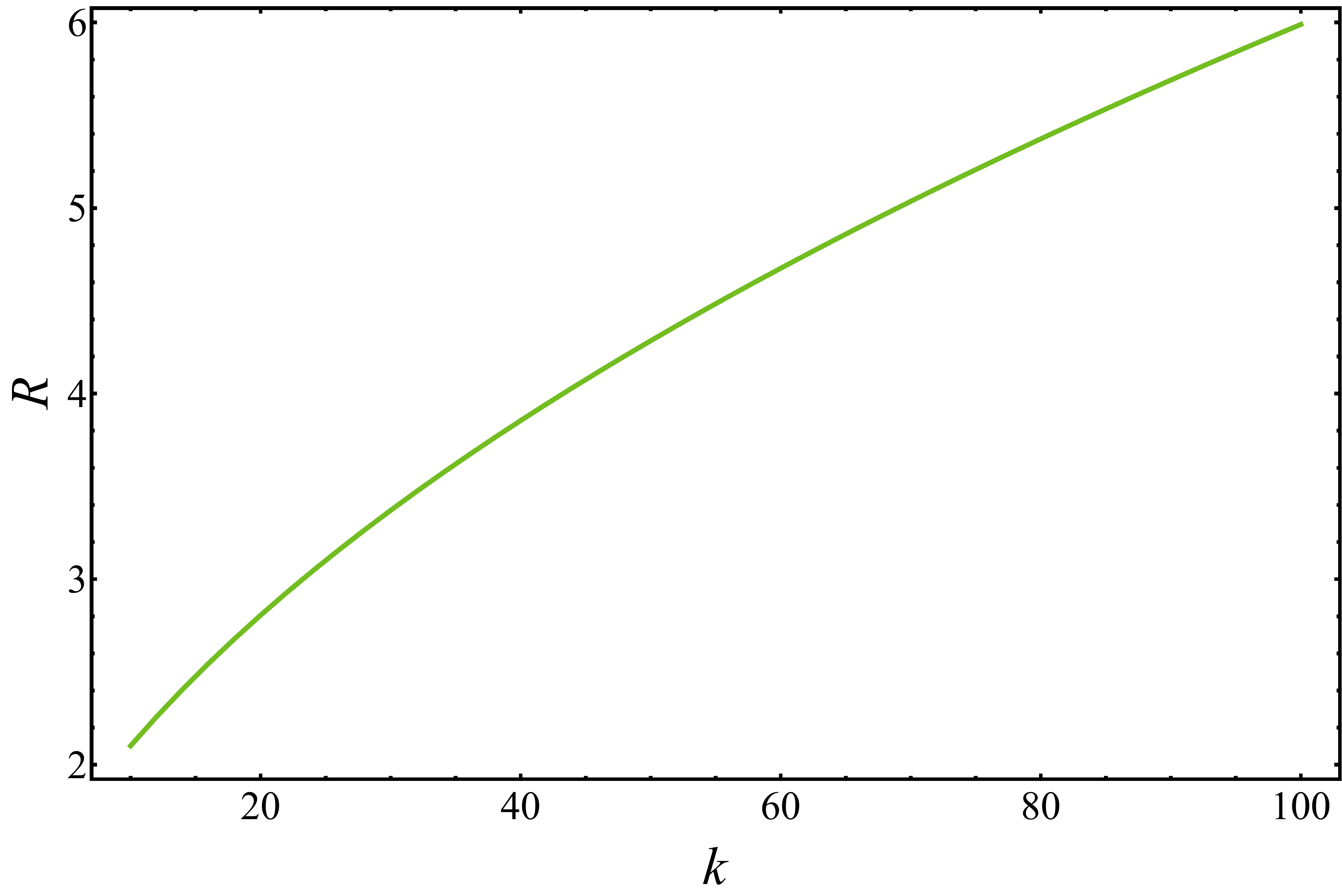}
\caption{Plot of the ratio $R$ of Eq. \eqref{Eq: Ratio} quantifying the improvement of using GBS sampling as opposed to uniform sampling for finding approximate solutions to the Max-Haf problem with submatrix size $k$. We plot $R$ for different values of even $k$ when $n=k^2$. The ratio is larger than 1 for all $k\geq 2$, showcasing the advantage obtained by using GBS.}\label{Fig: Hafnian advantage}
\end{figure}
\end{center}

This result quantifies the fundamental feature that makes GBS useful: whenever we sample a submatrix, the GBS distribution has a preference for selecting submatrices that have a large absolute Hafnian, while neglecting those with negligible Hafnians. The net result of this effect is that submatrices with larger Hafnians are on average sampled, leading to better approximations to the Max-Haf problem in a random search algorithm. However, random search algorithms, although powerful, do not exploit the local structure of the optimization landscape. Alternative stochastic algorithms employ a combination of exploration of the landscape -- which is done by random sampling -- and exploitation of local structure. We now outline how GBS can be used to enhance both of these elements within stochastic algorithms.

\section{Using GBS to enhance stochastic optimization algorithms}
Stochastic algorithms usually employ randomness in two ways: exploration and tweaking, which is used in the exploitation phase. Let us first show how GBS can be used to enhance exploration, which in the case of Max-Haf corresponds to a random selection of a set of submatrices $B_S$. Optimization algorithms usually perform this step either by selecting submatrices uniformly at random, or by seeding them, i.e., picking a few which are believed to be good submatrices. This latter strategy is often infeasible when the number of possibilities is exponentially large and may also restrict the resultant approximate solutions to sub-optimal local maxima since there are no guarantees that the initial picks are indeed good candidates. For this reason, uniform initialization is used as a reliable approach. However, as we have shown, it is always preferable to replace uniform sampling with proportional sampling. In the case of Max-Haf, this corresponds to sampling from the suitably postselected GBS distribution. We thus define a function, $\textsf{GBS-Explore}$, to be employed in heuristic algorithms:\\
\begin{defn}
$\textsf{GBS-Explore}(k,B)$: Given the matrix $B$, generate a sample $S$ from the conditional GBS distribution $\Pr(S|S \in \Gamma_{S,k})=\frac{1}{pq_{n,r}(k)(\cosh r)^n}|\text{\emph{Haf}}(B_S)|^2$. Output the corresponding matrix $B_S$.\\
\end{defn}
In a physical implementation, sampling from $\Pr(S|S \in \Gamma_{S,k})$ can be performed by generating outcomes from the complete distribution $\Pr(S)$ and keeping only outcomes with $k$ photons. This incurs a penalty of $pq_{n,r}(k)$ in terms of the sampling rate $q_{n,r}(k)$, which can be optimized by selecting an appropriate value of the squeezing parameter $r$, and the valid sample probability $p$, which approaches $1$ for $k \ll n$.

Tweaking is an operation used within exploitation where samples are randomly modified to produce new outputs in the vicinity of the original sample. In our case, since each possible submatrix is specified by a binary vector $S=(s_1,s_2,\ldots,s_n)$ with $\sum_{i=1}^n=k$, the only freedom for modification is to change the location of some of the non-zero entries of $S$. This is achieved by randomly selecting a subset of non-zero entries and setting them to zero, and then randomly selecting a subset of zero entries and setting them to one. To make use of GBS in this tweaking stage, we sample from the GBS distribution to identify which entries to change from zero to one. Here, the idea is that GBS will prefer to highlight new submatrices with a large absolute Hafnian to substitute into the candidate submatrix $B_{S}$. We formalize this into the subroutine $\textsf{GBS-Tweak}$, where $\ell \in  \{0,2, 4, \ldots, k - 2\}$ fixes the minimum number of non-zero entries of $S$ to be left unchanged:\\
\begin{defn}
\emph{$\textsf{GBS-Tweak}(S,\ell,B)$}: \begin{enumerate}
\item Generate a binary vector $S'$ with $\ell + L$ entries randomly selected from the non-zero entries of $S$ and with the remaining entries set to zero, where $L\in \{0,1,\ldots, k - \ell - 1\}$ is selected uniformly at random.
\item Generate a sample $T$ from the conditional GBS distribution $\Pr(T|\Gamma_{T,\ell})=\frac{1}{pq_{n,r}(\ell)(\cosh r)^{2n}}|\text{\emph{Haf}}(B_{T})|^2$, producing samples $T$ with $\ell$ non-zero entries.
\item Fix a binary vector $T'$ by selecting $L$ of the non-zero entries of $T$ at random and setting the remaining entries to zero. If there is any overlap between the non-zero entries of $T'$ and $S'$, repeat this step.
\item Calculate the vector $R$ given by combining the non-zero entries of $S'$ with $T'$ and return the submatrix $B_{R}$. 
\end{enumerate}
\end{defn}
The result of $\textsf{GBS-Tweak}$ is a submatrix $B_{R}$ which has been tweaked from $B_{S}$ by using GBS to preferentially select submatrices of size $\ell<k$ with large absolute Hafnians.

\textsf{GBS-Explore} and \textsf{GBS-Tweak} can be used to construct quantum-enhanced versions of any stochastic algorithm employing randomness for exploration and tweaking. This is the essential merit of GBS as a tool for approximate optimization: anything that a classical stochastic optimization algorithm can do, we can do better by enhancing it using GBS randomness. More generally, since classical algorithms must always operate on an initial source of uniform randomness, even algorithms whose goal is to sample from non-uniform distributions can be expected to improve by replacing their initial uniform sampling with GBS. 

\subsection{Applying GBS to example algorithms}

\begin{algorithm}[t!]
  \caption{GBS Simulated Annealing for Max-Haf}
\begin{algorithmic}\label{Algorithm:GBSSA}
\STATE $B$ : Input matrix
\STATE $k$ : Dimension of submatrix
\STATE $\ell$ : Minimum number of entries to change when tweaking
\STATE $t$ : Initial temperature
\STATE $a_{\text{max}}$ : Number of steps
\STATE $B_S =\textsf{GBS-Explore}(k,B)$
\STATE Best = $B_S$
\STATE \textbf{for} a from 1 to $a_{\text{max}}$:
\STATE $\hspace{5mm}$ $B_R = \textsf{GBS-Tweak}(S,\ell,B)$
\STATE $\hspace{5mm}$ \textbf{if} $|$Haf($B_R$)$|$ $>$ $|$Haf($B_S$)$|$:
\STATE $\hspace{10mm}$ $B_S = B_R$
\STATE $\hspace{5mm}$ \textbf{else}:  
\STATE $\hspace{10mm}$ Set $B_S = B_R$ with prob. $\exp[\frac{|\text{Haf}(B_R)| - |\text{Haf}(B_S)|}{t}]$
\STATE $\hspace{5mm}$ \textbf{if} $|$Haf($B_S$)$|$ $>$ $|$Haf(Best)$|$:
\STATE $\hspace{10mm}$ Best = $B_S$
\STATE $\hspace{5mm}$ Decrease $t$
\STATE \textbf{end for}
\STATE Output Best, $\Haf(\text{Best})$
\end{algorithmic}
\end{algorithm}

To exhibit the enhancement of stochastic algorithms through GBS for approximate optimization, we focus on two well-known algorithms: simulated annealing~\cite{vanLaarhoven1987simulated} and a greedy algorithm. We describe each algorithm and propose the enhanced versions using the \textsf{GBS-Explore} and \textsf{GBS-Tweak} subroutines. A comparison of performance between the two versions of each algorithm, i.e., enhanced with GBS sampling and the traditional uniform sampling, is then carried out in the following section.

Simulated annealing is a heuristic optimization algorithm that combines elements of random search and hill climbing. It begins with an exploration phase, where an initial submatrix is generated, and then proceeds to exploit the local landscape by repetitively tweaking the submatrix. For each tweak, if the absolute Hafnian of the proposal submatrix is larger than the current one, it is kept. If its absolute Hafnian is smaller, the proposal submatrix can still be retained with a probability that depends on the difference between the absolute Hafnians and a time-evolving temperature parameter. The temperature is initially set high, so new submatrices with lower absolute Hafnian are often accepted, leading effectively to a random search which avoids getting stuck in local minima. As the algorithm progresses, the temperature is lowered and only better submatrices are kept, leading to an effective hill-climbing behavior. The GBS version of simulated annealing is shown as Algorithm~\ref{Algorithm:GBSSA}.

The greedy algorithm begins with exploration by randomly generating a submatrix $B_S$ from GBS. It then proceeds to generate candidate replacement submatrices by exhaustively substituting the first row and column of $B_{S}$ with all possible remaining rows and columns of $B$, while keeping all other rows and columns of $B_S$ fixed. The best of these candidates is kept and set as $B_S$. The process is repeated for all rows and columns of $B_S$. Note that tweaking in the greedy algorithm is deterministic, and hence cannot be enhanced through GBS randomness. However, the initial exploration step can still be improved, leading us to the GBS Greedy Algorithm outlined in Algorithm \ref{Algorithm:GBSGreedy}. This algorithm can be repeated many times with a new random initial submatrix in each run. The resulting output is the best submatrix among all repetitions. The quantum enhancement of this hybrid algorithm arises because the starting submatrix will on average have a larger absolute Hafnian than one obtained from uniform sampling. 

\begin{algorithm}[t!]
  \caption{GBS Greedy Algorithm for Max-Haf}
\begin{algorithmic}\label{Algorithm:GBSGreedy}
\STATE $B$ : Input matrix
\STATE $k$ : Dimension of submatrix
\STATE $B_S=\textsf{GBS-Explore}(k,B)$
\STATE \textbf{for} $i$ from 1 to $k$:
\STATE $\hspace{5mm}$ BestHaf = 0
\STATE $\hspace{5mm}$ Bestj = 0
\STATE $\hspace{5mm}$ \textbf{for} $j$ from 1 to $n$:
\STATE $\hspace{10mm}$ $C=$ Matrix obtained by replacing $i$-th row/column \vspace{-0.35cm}
\STATE $\hspace{10mm}$ of $B_S$ with $j$-th row/column of $B$.  
\STATE $\hspace{10mm}$ \textbf{if} $|\text{Haf}(C)|>$ BestHaf:
\STATE $\hspace{15mm}$ BestHaf = $|\text{Haf}(C)|$
\STATE $\hspace{15mm}$ Bestj = j
\STATE $\hspace{10mm}$ \textbf{end if}
\STATE $\hspace{5mm}$ \textbf{end for}
\STATE $\hspace{5mm}$ Update $B_S$ by replacing its $i$-th row/column with the \STATE $\hspace{5mm}$ row/column corresponding to Bestj.
\STATE \textbf{end for}
\STATE Output $B_{S}$, $\Haf(B_S)$
\end{algorithmic}
\end{algorithm}

\subsection{Numerical study of enhancement through GBS}

Here we quantitatively analyze the performance of random search, simulated annealing and greedy algorithms in both their GBS and uniform versions. To do this, we setup an example Max-Haf problem by generating a $30\times 30$ random matrix $B=\tanh r\times UU^t$, where $U$ is drawn from the Haar measure. We fix the instance of the problem to be finding the submatrix of dimension 10 with the largest absolute Hafnian. To exactly perform GBS sampling, we carry out a brute force calculation of the GBS probability distribution conditioned on $10$ output photons with no more than one photon in each mode~\cite{bjorklund2018faster}, and then sample from the resulting distribution. The brute force calculation also allows us to find the exact solution of Max-Haf. Note that the dimension of the matrix and submatrix in this instance of Max-Haf are chosen to allow us to perform the numerical simulations in reasonable time.

We report results for algorithms making 1000 evaluations of the objective function, which is the largest overhead in each algorithm. For a random search and simulated annealing, the number of evaluations is equivalent to the number of samples. However, for the greedy algorithm, given an initial submatrix of dimension 10, the algorithm iterates over at least $21\times 10=210$ other submatrices, evaluating the Hafnian for each. Thus, to compare fairly to the results of random search and simulated annealing, we restrict ourselves to 5 repetitions of the greedy algorithm. Our results complement the study in~\cite{arrazola2018quantum2_published} testing the enhancement through GBS of random search and simulated annealing in finding dense subgraphs.

\begin{center}
\begin{figure}[h!]
\includegraphics[width= 0.96 \columnwidth]{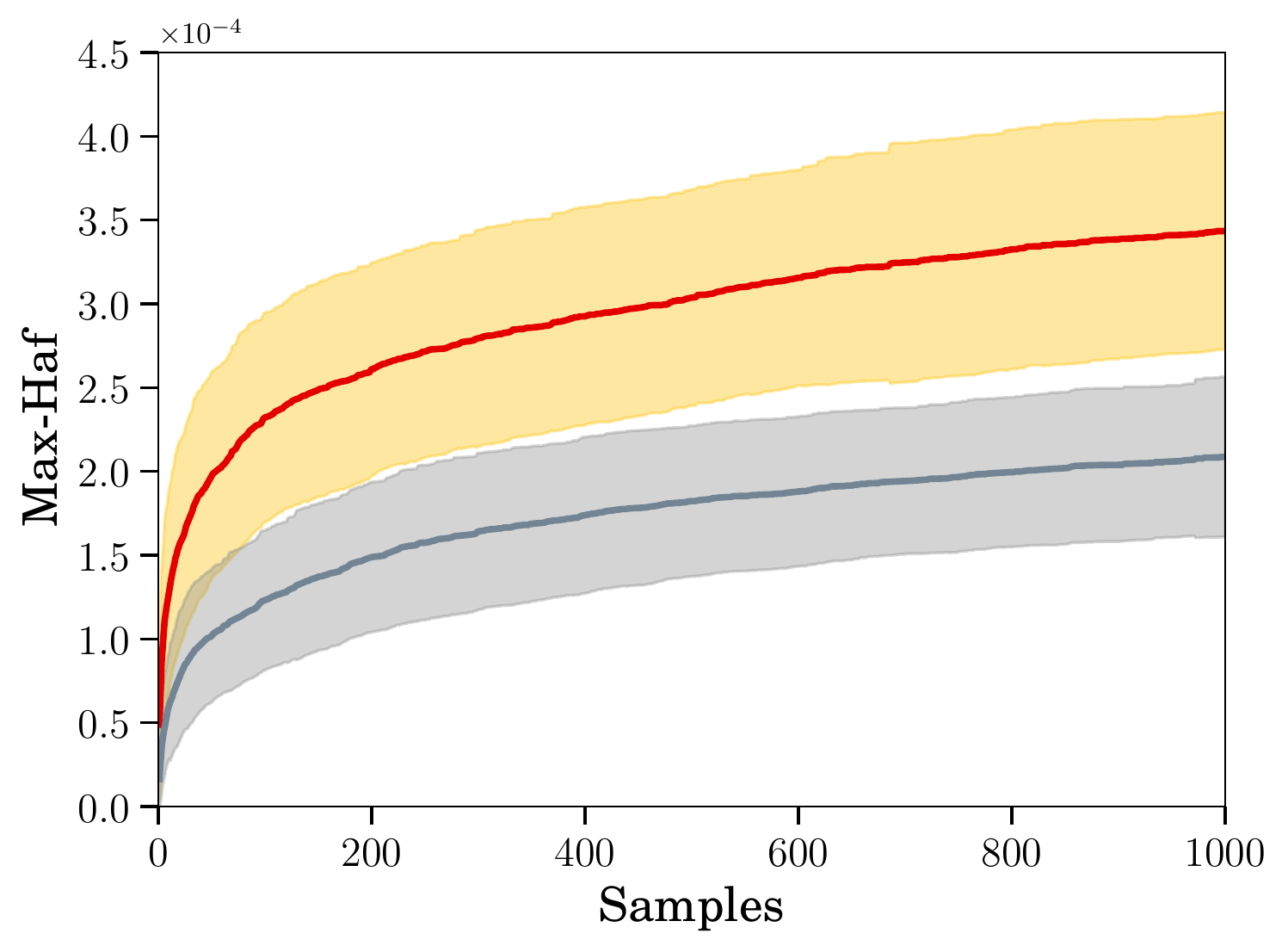}\\
\includegraphics[width= 0.96 \columnwidth]{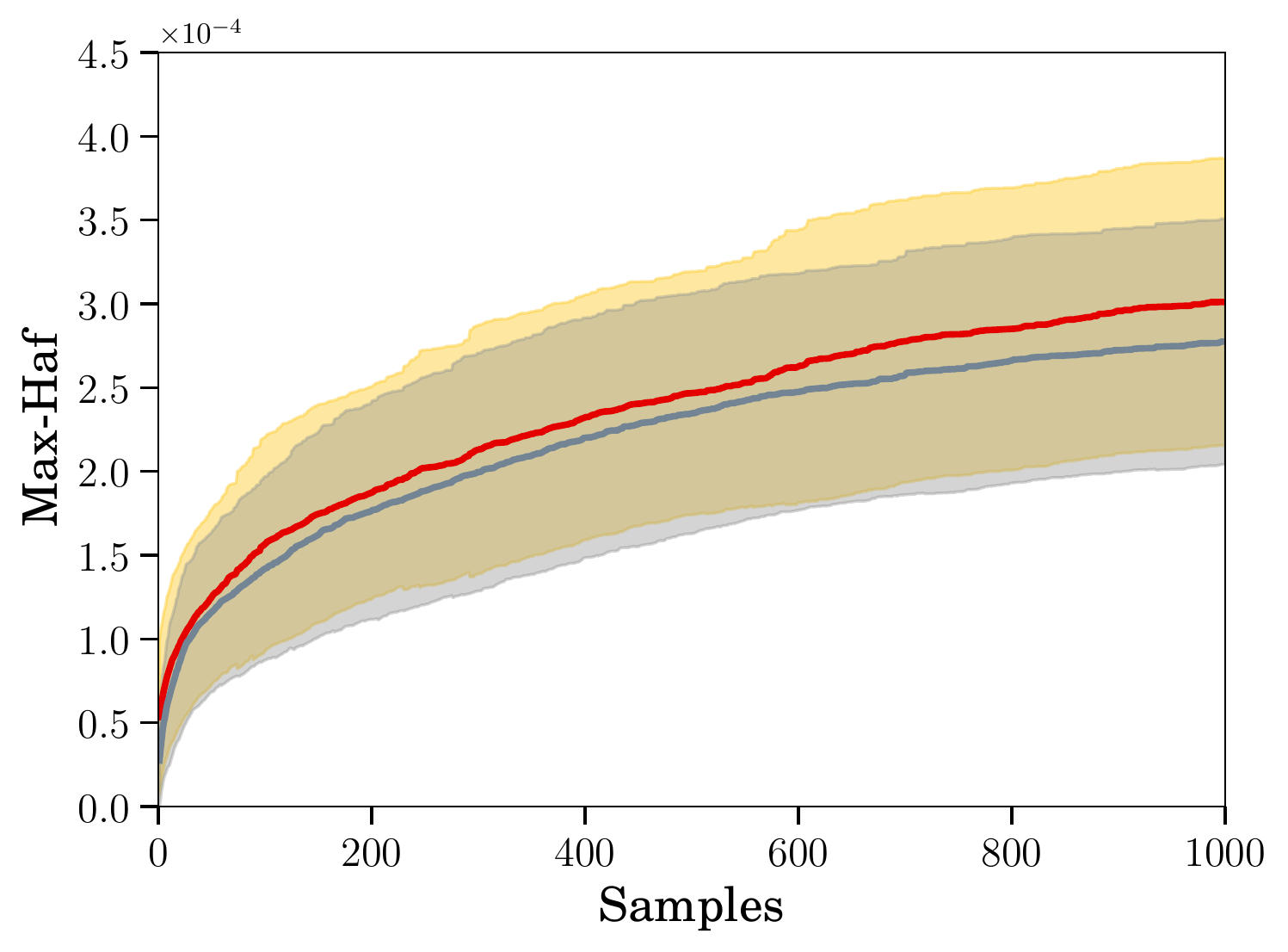}\\
\includegraphics[width= 0.96 \columnwidth]{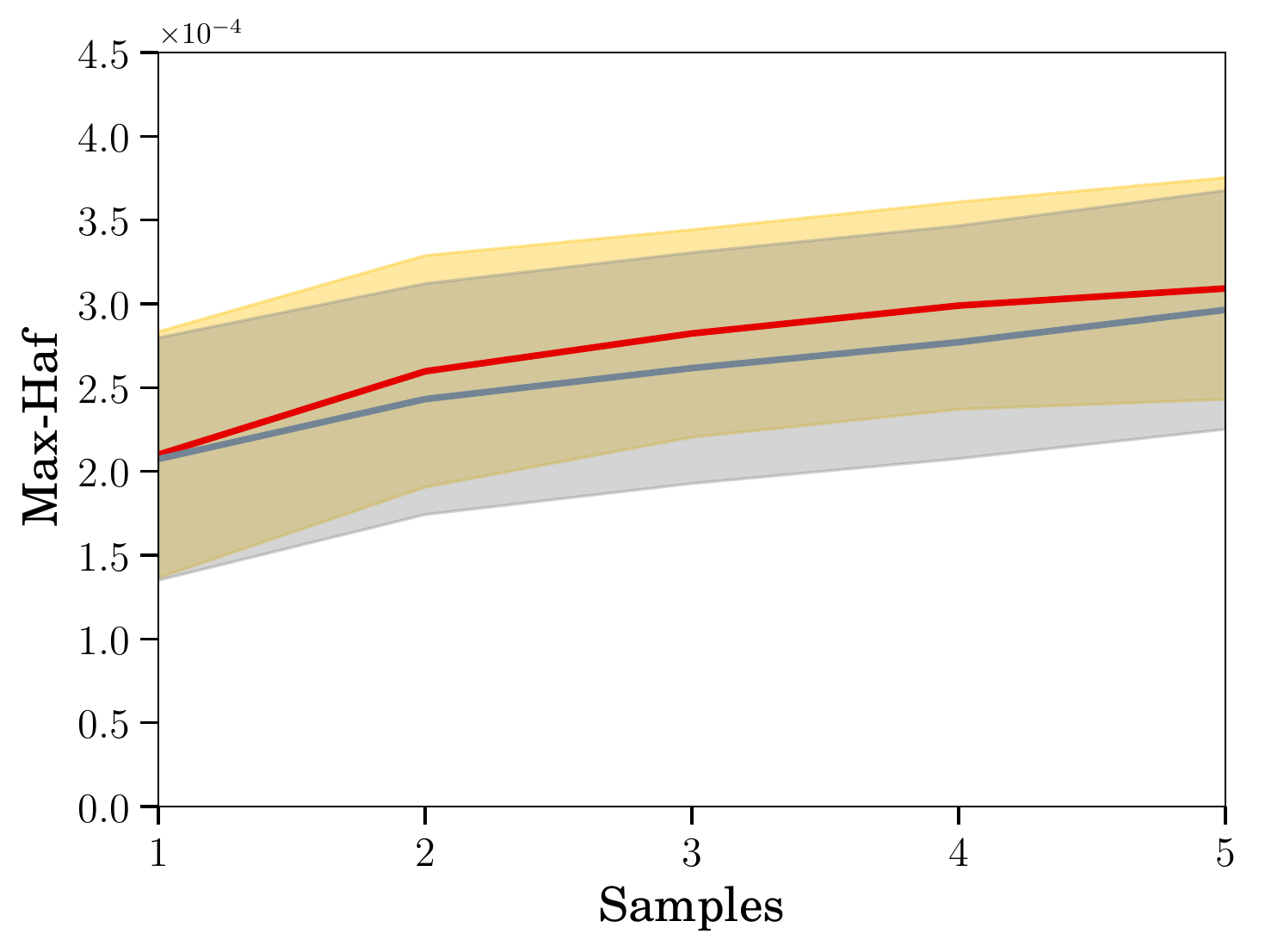}\\
\caption{Performance of random search (top), simulated annealing (middle) and  greedy (bottom) algorithms for Max-Haf. The goal is to find an optimal submatrix of dimension 10 from a $30\times 30$ starting matrix $B$ drawn from the circular orthogonal ensemble. The top red curves correspond to GBS random search and the bottom grey curves to uniform random search. The solid curve is the average observed over $400$ repetitions and the continuous error bars represent one standard deviation. Max-Haf can be solved by brute force with result $7.17 \times 10^{-4}$.} \label{Fig: Combined}
\end{figure}
\end{center}

The performances of random search, simulated annealing, and the greedy algorithm are shown in Fig. \ref{Fig: Combined}. The simulated annealing parameters were set to $T = 3 \times 10^{-5}$ and $\ell = 6$, with a linear cooling schedule used. As expected, we observe an improvement when using GBS over uniform sampling in all cases. The advantage is most significant for random search, which is the algorithm harnessing the most randomness. The improvement is more modest for simulated annealing and the greedy algorithm. This is to be expected for the greedy algorithm since it is mostly deterministic: the only random component is the initial exploration. On the other hand, the small advantage for simulated annealing indicates that GBS-tweak is only slightly more useful than a uniform tweak. For certain situations, as in Ref.~\cite{arrazola2018quantum2_published}, it is possible to further optimize GBS-tweak to increase the overall advantage. A striking feature of these results is that the GBS random search is the best performing algorithm among all variations, whereas uniform random search is the worst. This highlights the power of proportional sampling for optimization: even the simplest algorithm we have studied, when enhanced with GBS, surpasses the capabilities of more sophisticated classical algorithms. 

The algorithms we have discussed so far require several evaluations of a Hafnian, which is generally a \#P-Hard quantity to compute. For large problem sizes, when this evaluation becomes infeasible, it may be desirable to employ alternative algorithms such as Bayesian optimization \cite{mockus2012bayesian} that are more suitable for handling objective functions that are costly to evaluate. We reiterate that random search, greedy algorithms and simulated annealing are only examples, but the improvement from GBS can occur for any stochastic algorithm relying on exploration and tweaking. It is important to further investigate the extent to which GBS-enhanced stochastic algorithms for Max-Haf may outperform purely classical methods, in particular deterministic algorithms that are not straightforwardly improved by using GBS.

Finally, the Max-Haf problem is the canonical optimization task for GBS, so studying it provides the best insights for understanding how GBS can be applied to approximate optimization. However, the true potential of GBS is unlocked when we carry these insights over to a wider class of optimization problems, as shown in detail in Ref.~\cite{arrazola2018quantum2_published} for the densest $k$-subgraph problem.  

\section{Discussion}
We have argued that Gaussian boson sampling, besides providing an avenue for challenging the extended Church-Turing thesis, can be a valuable tool in enhancing stochastic algorithms for NP-Hard optimization problems. Our results provide additional evidence to challenge the common perception that boson sampling devices, while performing a task that is hard to replicate with classical computers, are not immediately relevant to problems of practical interest. Furthermore, classical sampling algorithms employ deterministic rules and sources of uniform randomness to generate samples from target distributions. This approach can often introduce significant overheads and it is incapable of efficiently sampling from arbitrary distributions. Boson sampling devices, and quantum computers more generally, when programmed in a deliberate and engineered fashion, have the potential to efficiently sample from a larger class of probability distributions, leading to improved algorithms for optimization and simulation.

Future work should focus on expanding the range of applications of Gaussian boson sampling for approximate optimization. One particularly appealing direction is to understand how we can generate proportional distributions for problems beyond submatrix optimization, for instance by employing techniques from machine learning. It will also be important for a given problem to benchmark the enhancement through GBS against optimized classical algorithms.

\textit{Acknowledgements.---} The authors thank Alex Arkhipov, Kamil Br{\' a}dler, Pierre-Luc Dallaire-Demers, Nathan Killoran, Seth Lloyd and Christian Weedbrook for valuable discussions.

\appendix
\section*{Appendix}
We perform a detailed calculation of the quantity $\mathbb{E}_{\mathcal{G}}(|\text{Haf}(X)|^4)$. Our calculation follows closely a strategy from Ref. \citep{aaronson2011computational} to compute analogous moments of permanents of Gaussian random matrices. We have
\begin{widetext}
\begin{align*}
\mathbb{E}_{\mathcal{G}}(|\text{Haf}(X)|^4)&=\mathbb{E}_{\mathcal{G}}\left(\sum_{\sigma,\tau,\alpha,\beta\in\text{PMP}_{2m}}\prod_{j=1}^n X_{\sigma(2j-1),\sigma(2j)}X_{\tau(2j-1),\tau(2j)}X^*_{\alpha(2j-1),\alpha(2j)}X^*_{\beta(2j-1),\beta(2j)}\right)\\
&=\sum_{\sigma,\tau,\alpha,\beta\in\text{PMP}_{2m}}\mathbb{E}_{\mathcal{G}}\left(\prod_{j=1}^n X_{\sigma(2j-1),\sigma(2j)}X_{\tau(2j-1),\tau(2j)}X^*_{\alpha(2j-1),\alpha(2j)}X^*_{\beta(2j-1),\beta(2j)}\right)\\
&=\sum_{\sigma,\tau,\alpha,\beta\in\text{PMP}_{2m}}M(\sigma,\tau,\alpha,\beta),
\end{align*}
where we have implicitly defined the function $M(\sigma,\tau,\alpha,\beta)$.
We write $\sigma\cup\tau=\alpha\cup\beta$ if
\begin{align*}
&\{(\sigma(1),\sigma(2)),(\tau(1),\tau(2)),\ldots,(\sigma(2m-1),\sigma(2m)),(\tau(2m-1),\tau(2m-1))\}=\\
&\{(\alpha(1),\alpha(2)),(\beta(1),\beta(2)),\ldots,(\alpha(2m-1),\alpha(2m)),(\beta(2m-1),\beta(2m-1))\}.
\end{align*}
\end{widetext}
Note that $M(\sigma,\tau,\alpha,\beta)\neq 0$ if and only if $\sigma\cup\tau=\alpha\cup\beta$, so we can restrict our attention to the case $\sigma\cup\tau=\alpha\cup\beta$. For each $j=1,2,\ldots,n$ there are two possibilities. If $\sigma(2j-1),\sigma(2j)\neq \tau(2j-1),\tau(2j)$ then
\begin{align*}
&\mathbb{E}_{\mathcal{G}}(X_{\sigma(2j-1),\sigma(2j)}X_{\tau(2j-1),\tau(2j)}X^*_{\alpha(2j-1),\alpha(2j)}X^*_{\beta(2j-1),\beta(2j)})\\
&=\mathbb{E}_{\mathcal{G}}(|X_{\sigma(2j-1),\sigma(2j)}|^2|X_{\tau(2j-1),\tau(2j)}|^2)\\
&=\mathbb{E}_{\mathcal{G}}(|X_{\sigma(2j-1),\sigma(2j)}|^2]\mathbb{E}_{\mathcal{G}}[|X_{\tau(2j-1),\tau(2j)}|^2)=1.
\end{align*}
Similarly if $\sigma(2j-1),\sigma(2j)= \tau(2j-1),\tau(2j)$ we have
\begin{align*}
&\mathbb{E}_{\mathcal{G}}(X_{\sigma(2j-1),\sigma(2j)}X_{\tau(2j-1),\tau(2j)}X^*_{\alpha(2j-1),\alpha(2j)}X^*_{\beta(2j-1),\beta(2j)})\\
&=\mathbb{E}_{\mathcal{G}}(|X_{\sigma(2j-1),\sigma(2j)}|^4)=2.
\end{align*}
We conclude that, whenever $\sigma\cup\tau=\alpha\cup\beta$, it holds that
\beq
M(\sigma,\tau,\alpha,\beta)=2^{K(\sigma,\tau)},
\eeq
where $K(\sigma,\tau)$ is the number of $j$ such that $\sigma(2j-1),\sigma(2j)= \tau(2j-1),\tau(2j)$. Finally, we obtain
\begin{align*}
\mathbb{E}_{\mathcal{G}}(|\text{Haf}(X)|^4)&=\sum_{\sigma,\tau\in \text{PMP}_{2m}}2^{K(\sigma,\tau)}N(\sigma,\tau)
\end{align*}
where $N(\sigma,\tau)$ is the number of permutations $\alpha,\beta$ such that $\sigma\cup\tau=\alpha\cup\beta$. This expression can be simplified further as follows. Let $\sigma_0$ be the identity permutation, which is a perfect matching permutation. Furthermore, for a given perfect matching permutation $\sigma$, let $\sigma^{-1}$ be the unique perfect matching permutation such that $\sigma^{-1}\sigma=\sigma_0$. We then have
\begin{align*}
\mathbb{E}_{\mathcal{G}}(|\text{Haf}(X)|^4)&=\sum_{\sigma,\tau\in \text{PMP}_{2m}}2^{K(\sigma,\tau)}N(\sigma,\tau)\\
&=[(2m-1)!!]^2\underset{\sigma,\tau}{\mathbb{E}_{\mathcal{G}}}(2^{K(\sigma,\tau)}N(\sigma,\tau))\\
&=\underset{\sigma,\tau}{\mathbb{E}_{\mathcal{G}}}(2^{K(\sigma^{-1}\sigma,\sigma^{-1}\tau)}N(\sigma^{-1}\sigma,\sigma^{-1}\tau))\\
&=\underset{\xi\in\text{PMP}_{2m}}{\mathbb{E}_{\mathcal{G}}}(2^{K(\sigma_0,\xi)}N(\sigma_0,\xi))\\\
&=(2m-1)!!\sum_{\xi\in\text{PMP}_{2m}}2^{K(\sigma_0,\xi)}N(\sigma_0,\xi).
\end{align*}
We now need to calculate $\sum_{\xi\in\text{PMP}_{2m}}2^{K(\sigma_0,\xi)}N(\sigma_0,\xi)$.\\

Define a graph $G_{\xi}(m)$ of $2m$ nodes as follows. For all $j=1,2,\ldots m$, draw an edge between nodes $(2j-1,2j)$. This is the graph representation of the identity permutation $\sigma_0$. Similarly, for $\xi=(i_1j_1)(i_2j_2)\ldots(i_{m}j_{m})$, draw an edge between nodes $(i_k,j_k)$ for all $k=1,2,\ldots,m$. This is the graph representation of $\xi$.  An example of a graph $G_{\xi}(m)$ is shown in Fig. \ref{Fig: CycleGraph}. Note that, by construction, $G_{\xi}(m)$ is a union of cycles.

\begin{center}
\begin{figure}[t!]
\includegraphics[width= 0.45\columnwidth]{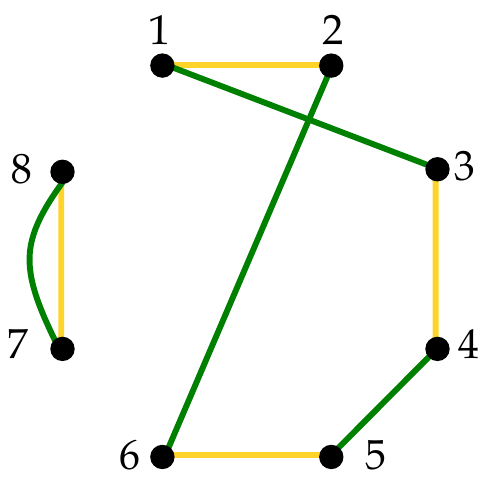}
\caption{Graph $G_{\xi}(4)$ for the perfect matching permutation $\xi=(13)(26)(45)(78)$. There is a 1-cycle corresponding to the equivalent action of the two permutations on the nodes 7 and 8. The remaining edges form a 3-cycle for which there are two inequivalent choices of permutations $\alpha,\beta$ such that $\sigma_0\cup\xi=\alpha\cup\beta$, namely: (i) $\alpha=(12)(34)(56)$, $\beta=(13)(26)(45)$ and (ii) $\alpha=(13)(26)(45)$, $\beta=(12)(34)(56)$. }\label{Fig: CycleGraph}
\end{figure}
\end{center}

\begin{center}
\begin{figure}[t!]
\includegraphics[width= \columnwidth]{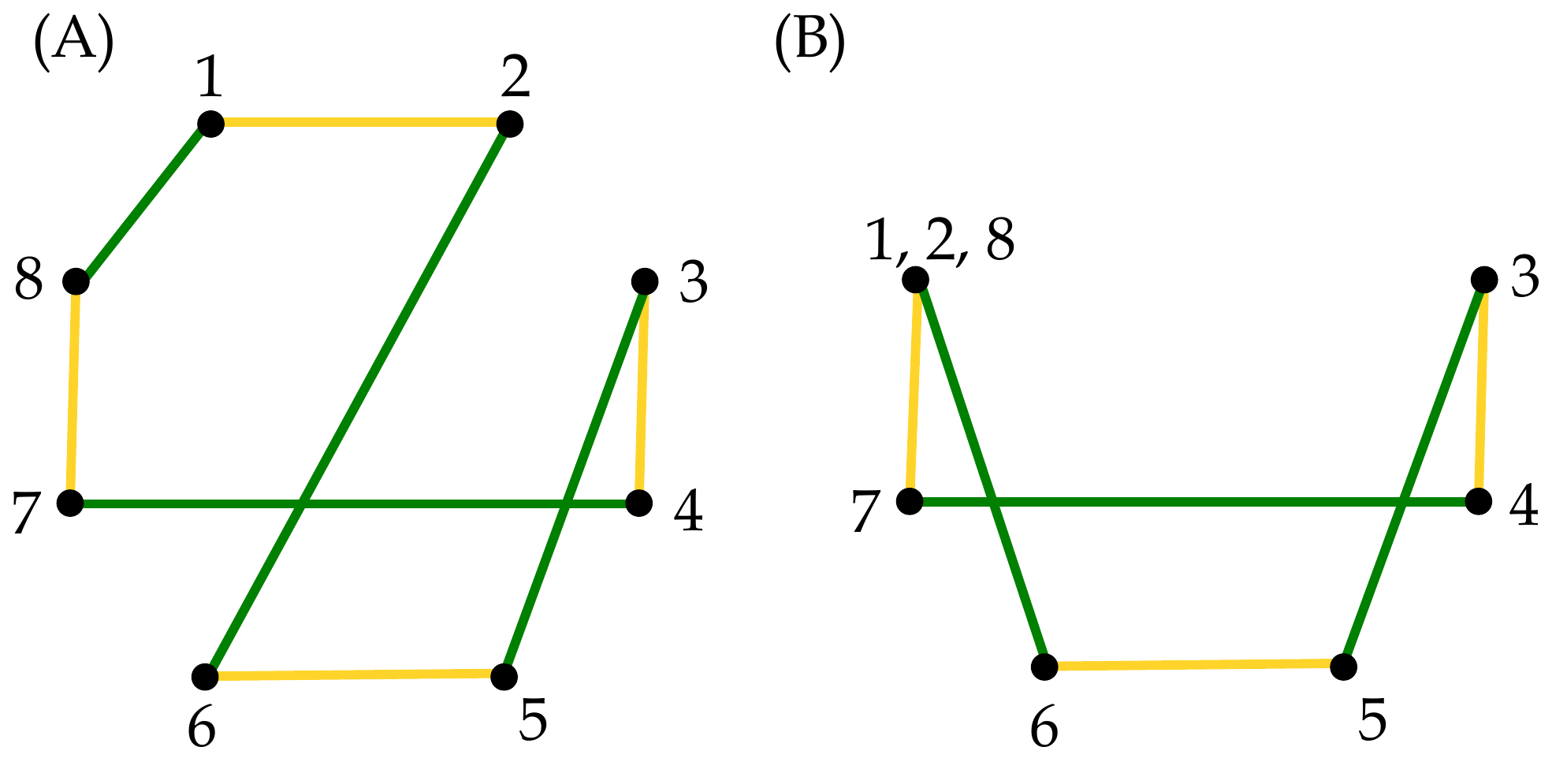}
\caption{When the permutation $\xi$ has a swap of the form $(1j)$ with $j\neq 2$, we can combine the nodes $1,2$ and $j$ into a single node without changing the number of cycles in the graph $G_{\xi}(m)$. Figure (A) shows the graph $G_{\xi}(4)$ for the perfect matching permutation $\xi=(18)(26)(35)(26)$ and figure (B) shows a graph with an equal number of cycles obtained by combining nodes 1,2, and 8 into a single node.}\label{Fig: EquivalentGraph}
\end{figure}
\end{center}

Whenever $(\sigma_0(2j-1),\sigma_0(2j))=(\xi(2j-1),\xi(2j))$, there will be a 1-cycle in the graph $G_{\xi}(m)$ consisting of the two edges joining nodes $2j-1$ and $2j$. Thus, $K(\sigma_0,\xi)$ is equal to the number of 1-cycles in $G_{\xi}(m)$. Every larger cycle in $G_{\xi}(m)$ gives rise to 2 inequivalent choices of 
$\alpha,\beta$ such that $\sigma_0\cup\xi=\alpha\cup\beta$ because we can assign the action of $\alpha$ to coincide with either $\sigma_0$ or $\xi$ when restricted to the nodes in the cycles, and similarly for $\beta$. We therefore have that 
\beq
N(\sigma_0,\xi)=2^{\Gamma_{\xi}}
\eeq
where $2^{\Gamma_{\xi}}$ is the number of cycles in $G_{\xi}(m)$ of length equal or larger than 2. Combining these results we conclude that
\begin{align}
\mathbb{E}_{\mathcal{G}}(|\text{Haf}(X)|^4)=(2m-1)!!\sum_{\xi\in\text{PMP}_{2m}}2^{\text{cyc}(\xi)},
\end{align}
where $\text{cyc}(\xi)$ is the number of cycles in $G_{\xi}(m)$. Now define the function
\beq
f(m):=\sum_{\sigma\in \text{PMP}_{2m}}2^{\text{cyc}(\xi)}
\eeq
so that
\beq\label{Eq: haf f(m)}
\mathbb{E}_{\mathcal{G}}(|\text{Haf}(X)|^4)=(2m-1)!!f(m).
\eeq

Our goal is to derive a recursion relation for $f(m)$. Focus on the nodes $(1,2)$. There are two possibilities for the action of $\xi$. One case is when $(12)$ appears in $\xi$. Call all such permutations $\xi_{12}$. For these permutations, the nodes $(1,2)$ already form a cycle in $G_{\xi}(m)$ and so it holds that
\begin{align*}
\sum_{\xi_{12}\in\text{PMP}_{2m}}2^{\text{cyc}(\xi)}&=2\sum_{\xi\in\text{PMP}_{2(m-1)}}2^{\text{cyc}(\xi)}\\
&=2f(m-1).
\end{align*}
Similarly, for all other permutations $\xi$ with an element of the form $(1j)$, for $j\neq 2$, we can combine the nodes $1,2$ and $j$ into a single node without changing the number of cycles in the graph. This is shown in Fig. \ref{Fig: EquivalentGraph}. Call these permutations $\xi_j$. Since there are $2m-2$ possible values of $j$, we have
\begin{align*}
\sum_{\xi_{j}\in\text{PMP}_{2m}}2^{\text{cyc}(\xi)}=& (2m-2)\sum_{\xi\in\text{PMP}_{2(m-1)}}2^{\text{cyc}(\xi)}\\
=&(2m-2)f(m-1).
\end{align*}
Combining these results we obtain the desired recursion relation $f(m)=2mf(m-1)$, which implies $f(m)=(2m)!!$. Setting $k=2m$, we finally obtain the desired expression for the second moment
\beq
\mathbb{E}_{\mathcal{G}}(|\text{Haf}(X)|^4)=(k-1)!!k!!=k!
\eeq

\end{document}